\documentclass[conference]{IEEEtran}
\IEEEoverridecommandlockouts %
\usepackage{comment}
\includecomment{itwfull}
\excludecomment{journalonly}

\usepackage{url}
\usepackage{cite}
\ifCLASSINFOpdf
  \usepackage[pdftex]{graphicx}
  \DeclareGraphicsExtensions{.pdf}
\else
  \usepackage[dvips]{graphicx}
  \DeclareGraphicsExtensions{.eps}

\fi

\usepackage{multirow}

\usepackage{tikz}
\usetikzlibrary{arrows,shapes,backgrounds,positioning}

\usepackage{pgfplots}
\usepackage{booktabs}

\usepackage{amsthm}
\usepackage[cmex10]{amsmath}
\usepackage{amssymb}
\interdisplaylinepenalty=2500

\newtheorem{theorem}{Theorem}

\newtheorem{lemma}[theorem]{Lemma}

\theoremstyle{definition}

\theoremstyle{remark}

\DeclareMathOperator{\E}{\mathbb{E}}
\DeclareMathOperator{\rank}{\text{rk}}

\newcommand{\ffield}{\mathbb{F}}
\newcommand{\bH}{\mathbf{H}}

\newcommand{\idx}{\mathbf{i}}
\newcommand{\revise}[1]{{#1}}

\hyphenation{time-slot}

\begin{document}
\title{Coding for Network-Coded Slotted ALOHA}

\author{\IEEEauthorblockN{Shenghao Yang\IEEEauthorrefmark{1}, Yi
    Chen\IEEEauthorrefmark{2}, Soung Chang Liew\IEEEauthorrefmark{3}
    and Lizhao You\IEEEauthorrefmark{3}}
  \IEEEauthorblockA{\IEEEauthorrefmark{1}Institute of Network Coding,
    \IEEEauthorrefmark{3}Information Engineering, 
    The Chinese University of Hong Kong, Hong Kong, China}
  \IEEEauthorblockA{\IEEEauthorrefmark{2}Electronic Engineering, City
    University of Hong Kong, Hong Kong, China}%
  \thanks{\revise{This
    work was partially supported by the National Natural Science
    Foundation of China under Grant 61471215. This work was partially
    supported by the General Research Funds (Project No. 414812) and
    the AoE grant (Project No.\ AoE/E-02/08), established under the
    University Grant Committee of the Hong Kong Special Administrative
    Region, China.}}}

\maketitle

\begin{abstract}
  Slotted ALOHA can benefit from physical-layer network coding (PNC)
  by decoding one or multiple linear combinations of the packets
  simultaneously transmitted in a timeslot, forming a system of linear
  equations. Different systems of linear equations are recovered in
  different timeslots.  A message decoder then recovers the original
  packets of all the users by jointly solving multiple systems of
  linear equations obtained over different timeslots. We propose the
  batched BP decoding algorithm that combines belief propagation (BP)
  and local Gaussian elimination.  Compared with pure Gaussian
  elimination decoding, our algorithm reduces the decoding complexity
  from cubic to linear function of the number of users. Compared with
  the ordinary BP decoding algorithm for low-density generator-matrix
  codes, our algorithm has better performance and the same order of
  computational complexity.  We analyze the performance of the batched
  BP decoding algorithm by generalizing the tree-based approach and
  provide an approach to optimize the system performance. 
\end{abstract}

\section{Introduction}

In a wireless multiple-access network operated with the slotted ALOHA
access protocol, a number of users transmit messages to a sink node
through a common wireless medium. Time is divided into discrete slots
and all transmissions start at the beginning of a timeslot.
Collisions/interferences occur when more than one user transmits in
the same timeslot \cite{roberts1975aloha}.  Successive interference
cancellation (SIC) was proposed for slotted ALOHA to resolve
collisions so that signals contained in collisions can be leveraged to
increase throughput \cite{casini2007contention}. In this approach, a
number of timeslots are grouped together as a frame. Each user aims to
deliver at most one packet per frame, but it can transmit 
copies of the same packet in different timeslots of the frame.

To see the essence, suppose we have two users and the first user transmits two
copies of packet $v_1$ in timeslots $1$ and $2$ respectively, and the
second user transmits one copy of packet $v_2$ in timeslots $2$. Since
no collision occurs, $v_1$ can be correctly decoded by the sink node
in timeslot $1$. A collision occurs in timeslot 2. In SIC, the sink
node can use $v_1$ decoded from timeslot 1 to cancel the interference
in timeslot 2. This approach can be applied iteratively to cancel more
interference, in a manner similar to the belief propagation (BP) decoding of LT
codes over erasure channels.  Slotted ALOHA with SIC has been
extensively studied based on the AND-OR-tree analysis, and optimal
designs have been obtained
\cite{liva2011graph,paolini2011high,stefanovic2012frameless,narayanan2012iterative,ghanbarinejad2013irregular}. %

\emph{Physical-layer network coding (PNC)} \cite{Zhang06} (also known
as \emph{compute-and-forward} \cite{Nazer11}) is recently applied to
wireless multiple-access network to improve the throughput
\cite{lu14n,ncma14,cocco2014}. Such multiple-access schemes, called
\emph{network-coded multiple access (NCMA)}, employ both PNC and
multiuser decoders at the physical layer to decode one or multiple
linear combinations of the packets simultaneously transmitted in a
timeslot.  Specifically, Lu, You and Liew \cite{lu14n} demonstrated by
a prototype that a PNC decoder may sometimes successfully recover linear
combinations of the packets when the traditional multiuser decoder (MUD)
\cite{verdu1998multiuser} that does not make use of PNC fails.
In the existing works on PNC (or compute-and-forward), the decoding of
the XOR of the packets of two users has been extensively investigated
\cite{shengli09,wubben10} (see also the overview
\cite{liew2013physical}). The decoding of multiple linear combinations
over a larger alphabet has been studied in \cite{Nazer11,
  feng2013algebraic,zhu2014isit}. %

In this paper, we consider slotted ALOHA employing PNC (and MUD) at
the physical layer, called \emph{network-coded slotted ALOHA (NCSA)}.
We assume that the physical-layer decoder at the sink node can
reliably recover one or multiple linear combinations of the packets
transmitted simultaneously in one timeslot.  Our work in this paper
does not depend on a specific PNC scheme. 
Specifically, we consider a $K$-user NCSA system, where each user has
one input packet to be delivered over a frame of timeslots. A
packet is the smallest transmission unit, which cannot be further
separated into multiple smaller transmission units. But it is allowed
to send multiple copies of a packet in different timeslots.  The
number of copies, called the \emph{degree}, is independently sampled
from a \emph{degree distribution}.  The linear equations decoded by
the physical layer in a timeslot form a system of linear equations.
Different systems of linear equations are recovered in different
timeslots. To recover the input packets of users, a message decoder is
then required to jointly solve these systems of linear equations
obtained over different timeslots. Though Gaussian elimination can be
applied to solve the input packets, the computational complexity is
$O(K^3+K^2T)$ finite-field operations, where $T$ the number of field
symbols in a packet. In this paper, we study the design of NCSA
employing an efficient message decoding algorithm.

With the possibility of decoding more than one linear combination of
packets in a timeslot, the coding problem induced by NCSA becomes
different from that of slotted ALOHA with SIC. We will show by an
example that the ordinary BP decoding algorithm of LT codes over
erasure channels is not optimal for NCSA. We instead propose a
\emph{batched BP decoding algorithm} for NCSA, where Gaussian
elimination is applied locally to solve the linear system associated
with each timeslot, and BP is applied between the linear systems
obtained over different timeslots. The computational complexity of our
algorithm is $O(KT)$ finite-field operations, which is of the same
order as the ordinary BP decoding algorithm. We analyze the asymptotic
performance of the batched BP decoding algorithm when $K$ is large by
generalizing the tree-based approach in \cite{luby98}. We provide an
approach to optimize the degree distribution based on our analytical
results.  

Though the batched BP decoding is similar to the one proposed for NCMA
\cite{yang14l,yang14c}, we cannot apply the analysis therein. In NCMA,
we assume that the number of users is fixed but the number of packets
to be delivered by each user tends to infinity. In NCSA, each user has
only one packet while the number of users can be large.

\revise{Similar schemes have been developed for random linear network
  coding over finite fields without explicitly considering the
  physical-layer effect, e.g., BATS codes and chunked codes (see
  \cite{yang14bats, yang14d} and the references therein). Here the
  technique for NCSA is different from BATS (or chunked) codes in two
  aspects. First, in BATS codes the degree distribution of batches is
  the parameter to be optimized, while in NCSA the degree distribution of
  the input packets (variable nodes) is the parameter to be
  optimized. Second, the decoding of BATS codes only solves
  the associated linear system of a batch when it is uniquely solvable (and hence
  recovers all the input packets involved in a batch), while the
  decoding of NCSA processes the associated linear system of a batch even
  when it is not uniquely solvable.}

\begin{itwfull}
  In the remainder of this paper, Section~\ref{sec:netw-coded-slott} formally introduces NCSA and
presents our main analytical result (Theorem~\ref{the:1}). An outline
of the proof of the theorem is given in
Section~\ref{sec:performance-analysis}. An example is provided in
Section~\ref{sec:example} to demonstrate the degree distribution
optimization and the numerical results. 
\end{itwfull}

\section{Network-Coded Slotted ALOHA}
\label{sec:netw-coded-slott}

In this section, we introduce the model of network-coded slotted ALOHA
(NCSA), the message decoding algorithm and the performance analysis
results.

\subsection{Slotted Transmission}

Fix a \emph{base field} $\mathbb{F}_q$ with $q$ elements and an
integer $m>0$.  Consider a wireless multiple-access network where $K$
source nodes (users) deliver information to a sink node through a
common wireless channel. Each user has one input packet for
transmission, formulated as a column vector of $T$ symbols in
the extension field $\mathbb{F}_{q^m}$. %

All the users are synchronized to a \emph{frame} consisting
of $n$ timeslots of the same duration.  The transmission of a packet
starts at the beginning of a timeslot, and the timeslots are long
enough for completing the transmission of a packet.  Each user transmits
a number of copies of its input packet within the frame. The
number of copies transmitted by a user, called the degree of the
packet, is picked independently according to a \emph{degree
  distribution} 
$\Lambda=(\Lambda_1,\ldots,\Lambda_{D})$, where $D$ is the maximum
degree. That is, with probability $\Lambda_d$, a user transmits $d$
copies of its input packet in $d$ different timeslots chosen uniformly
at random in the frame. Let $\bar \Lambda = \sum_{i=1}^Di\Lambda_i$, 
 $\Lambda(x) = \sum_{i} \Lambda_ix^i$ and $\Lambda'(x)=\sum_{i}i\Lambda_ix^{i-1}$. We also call $\Lambda(x)$ a
degree distribution.

Denote by $v_i$ the input packet of the $i$-th user.  Fix a
timeslot. Let $\Theta$ be the set of indices of the users who transmit a
packet in this timeslot.  The elements in $\Theta$ are ordered by the
natural order of integers. 
We assume that certain PNC scheme is applied, so
that the physical-layer decoder of the sink node can decode multiple
\emph{output packets}, each being a linear combination of $v_s,
s\in\Theta$ with coefficients over the base field $\ffield_q$.
Suppose that $B$ output packets are decoded ($B$ may vary from
timeslot to timeslot).  The collection of $B$ linear combinations can be expressed as
\begin{equation}\label{eq:batch}
  [u_1,\ldots,u_B] = [v_s,s\in\Theta] \bH,
\end{equation}
where $\bH$ is a $|\Theta|\times B$ full-column-rank matrix over
$\ffield_q$, called the \emph{transfer matrix}, and $[v_s,s\in\Theta]$
is the matrix formed by juxtaposing the vectors $v_s$, where $v_{s'}$
comes before $v_{s''}$ whenever $s'<s''$.

Note that in \eqref{eq:batch}, the algebraic operations are over the
field $\ffield_{q^m}$.  We call the set of packets
$\{u_1,\ldots,u_B\}$ decoded in a timeslot a \emph{batch}.
The cardinality of $\Theta$ (the number of users transmitting in a
timeslot) is called the \emph{degree of the batch/timeslot}. 
\revise{
We call the ratio $K/n$ the \emph{design rate} of NCSA.}

\begin{lemma}\label{lemma:1}
  When $K/n \rightarrow R$ as $K\rightarrow \infty$, the degree of a
  timeslot converges to the Poisson distribution with parameter
  $\lambda = R\bar \Lambda$ as $K\rightarrow \infty$.
\end{lemma}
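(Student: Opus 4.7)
The plan is to identify the degree of a fixed timeslot as a sum of independent Bernoulli random variables indexed by the users and then apply the standard Poisson limit theorem (binomial-to-Poisson convergence) as $K,n\to\infty$ with $K/n\to R$.

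First, I would fix an arbitrary timeslot $t$ and, for each user $i=1,\ldots,K$, define the indicator $X_i$ that user $i$ transmits in timeslot $t$. Since users sample degrees and timeslot subsets independently of one another, the $X_i$ are mutually independent. To compute $p_K := \Pr(X_i=1)$, I would condition on the degree $d$ of user $i$: given degree $d$, the user picks $d$ out of $n$ timeslots uniformly at random, so the probability that a particular timeslot is chosen equals $d/n$. Taking expectation over the degree distribution gives
\begin{equation*}
p_K = \sum_{d=1}^{D} \Lambda_d \cdot \frac{d}{n} = \frac{\bar\Lambda}{n}.
\end{equation*}
The timeslot degree is then $X = \sum_{i=1}^{K} X_i \sim \mathrm{Binomial}(K,p_K)$.

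Next, under the hypothesis $K/n\to R$, we have $K p_K = (K/n)\bar\Lambda \to R\bar\Lambda = \lambda$, while $p_K\to 0$. The Poisson limit theorem (or a direct computation of the probability generating function $\mathbb{E}[z^X] = (1-p_K+p_K z)^K \to e^{\lambda(z-1)}$) then yields $X \xrightarrow{d} \mathrm{Poisson}(\lambda)$. Pointwise convergence of the PGFs at every $z\in[0,1]$ is equivalent to convergence of the probability mass functions on the non-negative integers, which is precisely the conclusion of the lemma.

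There is no real obstacle here; the only care needed is to verify (i) the exchangeability/independence of the $X_i$, which follows because each user samples its degree and its subset of timeslots independently of the others, and (ii) that the per-user probability $d/n$ of hitting a fixed slot, averaged over $\Lambda$, gives exactly $\bar\Lambda/n$. Once these are in place, the result is immediate from the classical Poisson approximation to the binomial.
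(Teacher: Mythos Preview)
Your proposal is correct and follows essentially the same approach as the paper: the paper defers Lemma~\ref{lemma:1} to the more general Lemma~\ref{lemma:1a} (which allows a fixed set of $k$ users to be excluded), whose proof likewise computes the per-user probability of hitting a fixed slot as $\bar\Lambda/n$, identifies the timeslot degree as $\mathrm{Binomial}(K-k,\bar\Lambda/n)$, and invokes the binomial-to-Poisson limit. Your writeup is in fact more explicit about the conditioning that yields $p_K=\bar\Lambda/n$ and about why the indicators are independent, but the argument is the same.
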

\begin{itwfull}
\begin{proof}
  This is a special case of Lemma~\ref{lemma:1a} to be proved later in
  this paper.
\end{proof}
\end{itwfull}

Denote by $\mathcal{H}_{d}$ the collection of all the
full-column-rank, $d$-row matrices over $\ffield_q$, \revise{where we assume
that the empty matrix, representing the case that nothing is decoded,
is an element of $\mathcal{H}_{d}$.}  For a timeslot of degree $d$, we
suppose that the transfer matrix of the batch is $\bH\in
\mathcal{H}_d$ with probability $g(\bH|d)$. Further, we consider all
the users are symmetric so that for any $d\times d$ permutation matrix
$\mathbf{P}$,
\begin{equation}\label{eq:4}
  g(\bH|d) = g(\mathbf{P}\bH|d).
\end{equation}
The transfer matrices of all timeslots are independently generated
given the degrees of the timeslots. Examples of the
distribution $g$ will be given in Section~\ref{sec:example}.

\revise{We say a rate $R$ is \emph{achievable} by the NCSA
  system if for any $\epsilon>0$ and all sufficiently large $n$, at
  least $n(R-\epsilon)$ input packets are decoded correctly from the
  receptions of the $n$ timeslots with probability at least
  $1-\epsilon$.}

\subsection{Belief Propagation Decoding}

For multiple access described above, the goal of the sink node is to
decode as many input packets as possible during a frame. From the
output packets of the $n$ timeslots decoded by the physical layer, the
original input packets can be recovered by solving the linear
equations \eqref{eq:batch} of all the timeslots jointly. Gaussian
elimination has a complexity $O(K^3+K^2T)$ finite-field operations
when $n = O(K)$, which makes the decoding less efficient when $K$ is
large. %

The output packets of all the timeslots collectively can be regarded
as a low-density generator matrix (LDGM) code.  Similar to decoding an
LT code, which is also a LDGM code, we can apply the (ordinary) BP
algorithm to decode the output packets. In each step of the BP
decoding algorithm, an output packet of degree one is found, the
corresponding input packet is decoded, and the decoded input packet is
substituted into the other output packets in which it is involved. The
decoding stops when there are no more output packets of degree one.
\revise{However, as we will show in the next example, the ordinary BP
  decoding cannot decode some types of batches efficiently. We can
  actually do better than the ordinary BP decoding with little
  increase of decoding complexity by exploiting the batch structure of
  the output packets.}

For example, consider a batch of two packets $u_1$ and $u_2$ formed by
\begin{equation}
  \label{eq:1}
  \begin{bmatrix}
    u_1 & u_2
  \end{bmatrix} 
  =
  \begin{bmatrix}
    v_1 & v_2 & v_3 & v_4
  \end{bmatrix}
  \begin{bmatrix}
    1 & 0 \\
    0 & 1 \\
    1 & 1 \\
    1 & 1 
  \end{bmatrix}.
\end{equation}
Suppose that we use the ordinary BP decoding algorithm, and when the
BP decoding stops, $v_1$ is recovered \revise{by processing other batches}, but
$v_2$, $v_3$ and $v_4$ are not recovered. However, if we allow the decoder to
solve the linear system \eqref{eq:1}, we can further recover
$v_2=u_2-u_1+v_1$. The example shows that the BP decoding performance
can be improved if the linear system associated with a timeslot can be
solved locally.

Motivated by the above example, we propose the \emph{batched BP decoder}
for the output packets of the physical layer of NCSA.  The decoder
includes multiple iterations.  In the $i$-th iteration of the
decoding, $i=1,2,\ldots$ all the batches are processed individually
by the following algorithm: Consider a batch given in
\eqref{eq:batch}. Let $S\subset \Theta$ be the set of indices $r$ such
that $v_r$ is decoded in the previous iterations. When $i=0$, $S =
\emptyset$.  %
Let $\idx_\Theta:\Theta\rightarrow \{1,\ldots,|\Theta|\}$ be the
one-to-one mapping preserving the order on $\Theta$, i.e.,
$\idx_\Theta(s_1)<\idx_\Theta(s_2)$ if and only if $s_1<s_2$.
We also write $\idx(s)$ when $\Theta$ is clear from the
context. 
The algorithm first substitutes the values of $v_r, r\in
S$ into \eqref{eq:batch} and obtain
\begin{equation}
  \label{eq:2}
  [u_1,\ldots,u_B] - [v_r, r\in S] \bH^{\idx[S]} = [v_s,s\in\Theta\setminus S] \bH^{\idx[\Theta\setminus S]},
\end{equation}
where $\bH^{\idx[S]}$ is the submatrix of $\bH$ formed by the rows
indexed by $\idx[S]$.  The algorithm then applies Gaussian
(Gauss-Jordan) elimination on the above linear system so that
$\bH^{\idx[\Theta\setminus S]}$ is transformed into the reduced column
echelon form $\tilde \bH$ and \eqref{eq:2} becomes
\begin{equation}
  \label{eq:3}
  [\tilde u_1,\ldots, \tilde u_B] = [v_s,s\in\Theta\setminus S] \tilde \bH.
\end{equation}
Suppose that the $j$-th column of $\tilde \bH$ has only one nonzero
component (which should be one) at the row corresponding to user
$s$. The value of $v_s$ is then $\tilde u_j$ and hene recovered.  The algorithm
returns the new recovered input packets by searching the columns of
$\tilde \bH$ with only one non-zero component.

For a batch with degree $d$, the complexity of the above decoding is
$O(d^3+d^2T)$. Suppose that $K/n$ is a constant and the maximum degree
$D$ does not change with $K$. Since the degree of a
batch converge to the Poisson distribution with parameter
$\frac{K}{n}\bar\Lambda$ (see Lemma~\ref{lemma:1}), the average
complexity of decoding a batch is $O(T)$ finite-field
operations. Hence the total decoding complexity is $O(KT)$
finite-field operations.

\subsection{Decoding Performance}

For an integer $j$, denoted by $[j]$ the set of integers
$\{1,\ldots,j\}$. When $j\leq 0$, $[j]=\emptyset$. For any
$\mathbf{H}\in \mathcal{H}_{d}$, define $\gamma(\mathbf{H})$ as the
collection of all subsets $V$ of $[d-1]$ such that 
in the linear system \eqref{eq:batch},
$v_{\idx^{-1}(d)}$ can be uniquely solved when the values of $v_r, r\in \idx^{-1}[V]$ are known.
Taking the transfer matrix in \eqref{eq:1} as an example, we have
\begin{equation*}
  \gamma(\mathbf{H}) = \{ \{1, 3\}, \{2,3\}, \{1,2,3\}\}.
\end{equation*}
For a timeslot of degree one, the transfer matrix $\bH$ is the
one-by-one matrix with the unity. Then
$\gamma(\bH) = \{ \emptyset\}$.
For any intiger $k\geq 0$, define
\begin{equation*}
  \Gamma_k(x) = \sum_{\bH\in \mathcal{H}_{k+1}} g(\bH|k+1) \sum_{S\in
    \gamma(\bH)} x^{|S|}(1-x)^{k-|S|}.
\end{equation*}
\revise{In other words, $\Gamma_k(x)$ is the probability that when $k+1$ users
transmitted in a timeslot, the input packets of the user with the
largest index can be recovered if each of the other users' packet is
known with probability $x$.}

We assume that the maximum degree $D$ is a contant that does not
change with $K$. The following theorem, proved in the next section,
tells us the decoding performance of $l$ iterations of the batched BP
decoder when $K$ is sufficiently large. We apply the convention that
$0^0=1$. 

\begin{theorem}\label{the:1}
  Fix real numbers $R>0$, $\epsilon>0$ and an integer $l>0$.  Consider a
  multiple-access system described above with $K$ users and $n =
  \lceil K/R\rceil$ timeslots. 
  Define 
  \begin{equation*}
    z^*_l = 1 - \Lambda\left(1-\sum_k\frac{\lambda^k e^{-\lambda}}{k!}\Gamma_k(z_{l-1})\right),
  \end{equation*}
  where   $z_0=0$ and   for $1\leq i <l$
  \begin{equation*}
    z_i = 1 - \Lambda'\left(1-\sum_k\frac{\lambda^k e^{-\lambda}}{k!}\Gamma_k(z_{i-1})\right)/\bar\Lambda,
  \end{equation*}
  where $\lambda = R \bar\Lambda$.  Then for any sufficiently large
  $K$, $l$ iterations of the batch BP decoder will recover at
  least $K(z^*_l-\epsilon)$ input packets with
  probability at least $1-\exp(-c\epsilon^2K)$, where $c$ is a number
  independent of $K$ and $n$.
\end{theorem}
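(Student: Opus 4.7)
My plan is to follow the tree-based (``AND-OR tree'') analysis of Luby et al., suitably generalized to handle the batch structure of the output packets. First I would model the decoder as a bipartite graph $G$ on $K$ variable nodes (input packets) and $n$ check nodes (timeslots/batches): each variable node independently samples a degree from $\Lambda$ and is connected to that many check nodes chosen uniformly at random. Conditioned on the graph, each check node of degree $d$ independently draws a transfer matrix $\bH\in\mathcal{H}_d$ from $g(\cdot|d)$. The symmetry condition \eqref{eq:4} ensures that the identities of the $d$ variables incident to a batch can be permuted freely without changing the joint distribution, which is what makes the edge-perspective analysis well-defined. From Lemma~\ref{lemma:1} the batch size converges to $\text{Poisson}(\lambda)$ with $\lambda = R\bar\Lambda$, and conditional on a fixed user being present in a batch the number of \emph{other} users in it converges to $\text{Poisson}(\lambda)$ as well (so the batch degree seen from an edge is Poisson plus one, justifying the shift in $\Gamma_k$).

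Next I would carry out the local tree analysis. Fix a variable node $v$ and expose its depth-$2l$ neighborhood. Since the maximum degree $D$ and $l$ are constant and $K/n\to R$, the neighborhood has bounded expected size and is a tree with probability $1-O(1/K)$. On this tree I would define, inductively, $z_i$ as the probability that the variable at the end of a chosen edge is recovered after $i$ iterations, given the subtree hanging off it. A variable at depth $\geq 1$ is recovered iff at least one of its remaining (non-parent) batches recovers it; a batch of degree $k+1$ recovers the variable iff the set of already-recovered ``sibling'' variables lies in $\gamma(\bH)$, which happens with probability exactly $\sum_{\bH}g(\bH|k+1)\sum_{S\in\gamma(\bH)} z^{|S|}(1-z)^{k-|S|}=\Gamma_k(z)$ (using the permutation symmetry of $g$ to place $v$ in the last coordinate without loss of generality). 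Taking the expectation over the edge-perspective variable degree distribution $\Lambda'(x)/\bar\Lambda$ and the Poisson$(\lambda)$ batch degree yields the recursion defining $z_i$, and the node-perspective version at the root yields $z_l^*$.

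The final step is the concentration argument. Let $X_K$ be the number of input packets recovered after $l$ iterations. By the tree analysis, $\E[X_K]\geq K(z_l^*-\epsilon/2)$ for all sufficiently large $K$. To replace expectation by a high-probability statement I would use a Doob martingale obtained by exposing the random choices of the graph (the neighbor sets of each variable node) and the transfer matrices one at a time, in some order of $O(K)$ steps. Each exposure can change $X_K$ by at most a constant (depending on $D$ and $l$ but not on $K$), because perturbing one local choice only affects the recovery status of variables within the depth-$2l$ neighborhood of that change, and this neighborhood has bounded size (with uniform tail bounds on its size from a stochastic domination by a Galton--Watson process). Azuma's inequality then gives $\Pr(X_K<\E[X_K]-\epsilon K/2)\leq \exp(-c\epsilon^2 K)$ for some constant $c$ independent of $K,n$, completing the proof.

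The main obstacle will be the last step: establishing uniformly bounded one-step differences for the martingale. In the classical LT/AND--OR analysis a single edge flip propagates through a tree of bounded expected size, but here a single \emph{batch} change can reshuffle Gaussian elimination over up to $D$ variables and thereby toggle the recovery status of many variables at once; I will need a careful argument that the number of recoveries affected is still bounded by a constant times the size of the local neighborhood, and that this size has subexponential tails so the bounded-difference constant in Azuma's inequality is well controlled. Once this is handled, combining the tree recursion with the concentration bound delivers the quantitative statement of Theorem~\ref{the:1}.
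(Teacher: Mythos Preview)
Your proposal follows essentially the same architecture as the paper's proof: bipartite graph model, tree-based density evolution via an idealized tree $T_{2l}$ yielding the $z_i$ and $z_l^*$ recursions, and a Doob martingale plus Azuma--Hoeffding for concentration. The recursion derivation you outline (edge-perspective degree $\Lambda'(x)/\bar\Lambda$, Poisson$(\lambda)$ batch degree from the edge side, $\Gamma_k$ capturing the batch-local Gaussian elimination) matches the paper exactly.

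The one place where the paper proceeds differently from what you sketch is precisely the obstacle you flag. You propose to control the martingale differences by arguing that the depth-$2l$ neighborhood ``has bounded size (with uniform tail bounds \ldots\ from a stochastic domination by a Galton--Watson process).'' But the check-node degrees are Poisson and hence unbounded, so the neighborhood is \emph{not} almost-surely bounded, and a bare Azuma inequality does not apply. The paper sidesteps this cleanly by introducing a truncation parameter $L$: it restricts attention to the class $\mathcal{G}_{2l}(L)$ of depth-$2l$ neighborhoods in which every check node has at most $L$ children, proves a coupling lemma showing $\Pr\{G_{2l}=\mathbf{G}\}\geq \Pr\{T_{2l}=\mathbf{G}\}-c_{l,L}\epsilon_L$ for $\mathbf{G}\in\mathcal{G}_{2l}(L)$ (with $\epsilon_L$ the Poisson tail), and then defines the random variable $A$ to count only those variable nodes whose neighborhood lies in $\mathcal{G}_{2l}(L)$ and is decodable. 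With this definition, exposing one variable node's $1$-neighborhood changes $A$ by at most a constant depending on $D,L,l$, so Azuma applies directly; the Poisson tail $\epsilon_L$ is absorbed into the $\epsilon/2$ slack when choosing $L$ large. Your tail-bound route could in principle be made to work with a concentration inequality that tolerates rare large differences, but the paper's truncation device is the simpler and more robust way to close the gap you identified.
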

\begin{proof}
  See Section \ref{sec:performance-analysis}.
\end{proof}

\begin{lemma}\label{lemma:gamma}
  $\Gamma_k(x)$ is an increasing function of $x$.
\end{lemma}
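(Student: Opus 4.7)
The plan is to exploit the probabilistic interpretation of $\Gamma_k(x)$ already noted in the remark just before the lemma, and then reduce monotonicity in $x$ to a coupling argument based on a simple upward-closure property of $\gamma(\bH)$.

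First I would rewrite, for each fixed $\bH \in \mathcal{H}_{k+1}$, the inner sum as a probability:
\begin{equation*}
  \sum_{S \in \gamma(\bH)} x^{|S|}(1-x)^{k-|S|} = \Pr[\,W(x) \in \gamma(\bH)\,],
\end{equation*}
where $W(x) \subseteq [k]$ is the random set obtained by including each $r \in [k]$ independently with probability $x$. Since $g(\bH|k+1) \geq 0$ for every $\bH$, it suffices to show that this probability is nondecreasing in $x$ for each fixed $\bH$.

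Next I would establish the key structural fact: $\gamma(\bH)$ is upward-closed in $2^{[k]}$, i.e., if $V \subseteq V' \subseteq [k]$ and $V \in \gamma(\bH)$, then $V' \in \gamma(\bH)$. This follows directly from the definition of $\gamma(\bH)$. Knowing the values of $v_r$ for $r \in \idx^{-1}[V']$ imposes all the constraints that knowing them for $r \in \idx^{-1}[V]$ does, plus additional ones. Hence the affine set of assignments to $[v_s, s \in \Theta]$ consistent with \eqref{eq:batch} and the additional known values under $V'$ is contained in the corresponding affine set under $V$. If the latter projects to a single point on the $v_{\idx^{-1}(k+1)}$-coordinate, so does the former; nonemptiness is preserved because the true assignment remains feasible. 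Thus $V \in \gamma(\bH)$ implies $V' \in \gamma(\bH)$.

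Finally I would close the argument by a monotone coupling. Draw i.i.d.\ uniform variables $U_1, \ldots, U_k$ on $[0,1]$ and set $W(x) = \{ r \in [k] : U_r \leq x\}$. Then $W(x)$ has the required product distribution, and $W(x_1) \subseteq W(x_2)$ whenever $x_1 \leq x_2$. By upward-closure, the event $\{W(x_1) \in \gamma(\bH)\}$ is contained in $\{W(x_2) \in \gamma(\bH)\}$, so $\Pr[W(x_1) \in \gamma(\bH)] \leq \Pr[W(x_2) \in \gamma(\bH)]$. Summing over $\bH \in \mathcal{H}_{k+1}$ with the nonnegative weights $g(\bH|k+1)$ gives $\Gamma_k(x_1) \leq \Gamma_k(x_2)$. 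The only step that needs genuine care is the upward-closure of $\gamma(\bH)$; once that is in hand, the coupling argument is entirely standard.
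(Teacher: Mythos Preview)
Your argument is correct. The probabilistic rewriting of the inner sum as $\Pr[W(x)\in\gamma(\bH)]$ is exact, the upward-closure of $\gamma(\bH)$ follows for the reason you give (adding known variables can only shrink the affine solution set, and feasibility is never lost because the true assignment satisfies \eqref{eq:batch}), and the monotone coupling via uniform thresholds is the standard way to turn upward-closure into monotonicity in $x$. One small point: your argument establishes that $\Gamma_k$ is \emph{nondecreasing}; strict monotonicity would require an $\bH$ with $g(\bH|k+1)>0$ whose $\gamma(\bH)$ is neither empty nor all of $2^{[k]}$, which is not assumed and indeed fails for $k=0$. The paper's use of the lemma (to show $\{z_i\}$ is monotone) only needs the weak inequality, so this is not a defect.

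Compared with the paper, your route is different only in presentation: the paper's proof consists solely of a pointer to \cite[Lemma~13]{yang14c} and gives no in-paper argument. Your self-contained upward-closure plus coupling proof is elementary and complete, and is almost certainly the content behind the cited lemma; it has the advantage of making the proof verifiable without chasing the reference.
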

\begin{itwfull}
\begin{proof}
  This lemma can be proved by applying \cite[Lemma~13]{yang14c}.  
\end{proof}
\end{itwfull}

\subsection{Degree Distribution Optimization}
\label{sec:perf-eval}

Theorem~\ref{the:1} induces a general approach to optimize the degree
distribution $\Lambda$. Let
\begin{equation*}
  f(x;\lambda) = 1 - \Lambda'\left(1-\sum_k\frac{\lambda^k
      e^{-\lambda}}{k!}\Gamma_k(x)\right)/\bar\Lambda. 
\end{equation*}
We have $z_i = f(z_{i-1};\lambda), i=1,\ldots,l-1$. Suppose that we allow
$l\rightarrow \infty$. The sequence $\{z_i\}$ is increasing (implied
by Lemma~\ref{lemma:gamma}) and
converges to the first value $x>0$ such that $f(x;\lambda)=x$. 
For given value of $\lambda$, $0< \epsilon<1$
and $0<\eta\leq 1$, we can optimize the degree distribution $\Lambda$
by solving 
\begin{equation}\label{eq:6}
  \begin{IEEEeqnarraybox}[][c]{r.l}
    \max & R \\
    \text{s.t.} & 
    f(x;\lambda) \geq x(1+\epsilon), \quad \forall x\in (0,\eta], \\
    & \sum_ii\Lambda_i = \lambda/R, \sum_i\Lambda_i=1, \Lambda_i\geq 0.
  \end{IEEEeqnarraybox}
\end{equation}

\begin{theorem}\label{the:2l}
  Denote by $R(\lambda,\epsilon)$ the optimal value of the above
  optimization. Then the rate
  \begin{equation*}
    R^*(\lambda,\epsilon) = R(\lambda,\epsilon) 
  \left(1-\Lambda\left(1-\sum_{k}\frac{\lambda^k e^{-\lambda}}{k!}
  \Gamma_k(\eta)\right)\right)
  \end{equation*}
  packet per timeslot is achievable for the batched BP decoding algorithm.
\end{theorem}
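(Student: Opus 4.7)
The plan is to apply Theorem~\ref{the:1} to a fixed optimal solution $(\Lambda, R(\lambda,\epsilon))$ of~\eqref{eq:6} and pick a number of iterations $l$, independent of $K$, large enough that $z_l^*$ in Theorem~\ref{the:1} is at least $R^*(\lambda,\epsilon)/R(\lambda,\epsilon)$. As a preliminary, the feasibility constraints $\sum_i i\Lambda_i=\lambda/R$ and $\sum_i\Lambda_i=1$ force $\lambda = R(\lambda,\epsilon)\bar\Lambda$, so the Poisson parameter of Theorem~\ref{the:1} coincides with the $\lambda$ appearing inside $f(x;\lambda)$.

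The key step is to analyze the recursion $z_i = f(z_{i-1};\lambda)$ with $z_0=0$. I would first check that the sequence leaves zero: for a degree-one batch $\gamma(\bH)=\{\emptyset\}$, so $\Gamma_0(0)=1$ and hence $\sum_k \frac{\lambda^k e^{-\lambda}}{k!}\Gamma_k(0)\geq e^{-\lambda}>0$; combined with $\Lambda'(1)=\bar\Lambda$ and the nonnegativity of the coefficients of $\Lambda'$, this gives $z_1 = f(0;\lambda)>0$. Once $z_1>0$, the feasibility condition $f(x;\lambda)\geq(1+\epsilon)x$ on $(0,\eta]$ implies by induction that $z_i \geq (1+\epsilon)^{i-1}z_1$ as long as $z_{i-1}\in(0,\eta]$, so the sequence exceeds $\eta$ after at most $l_0 = \lceil\log(\eta/z_1)/\log(1+\epsilon)\rceil+1$ iterations, a constant independent of $K$.

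Fix $l = l_0+1$; then $z_{l-1}\geq\eta$. Applying Lemma~\ref{lemma:gamma} and the monotonicity of $\Lambda$ on $[0,1]$ to the defining expression of $z_l^*$ in Theorem~\ref{the:1} yields
\begin{equation*}
  z_l^* \geq 1 - \Lambda\left(1-\sum_{k}\frac{\lambda^k e^{-\lambda}}{k!}\Gamma_k(\eta)\right) = \frac{R^*(\lambda,\epsilon)}{R(\lambda,\epsilon)}.
\end{equation*}
Theorem~\ref{the:1} with this fixed $l$ then guarantees that for any $\delta>0$ and sufficiently large $K$ (with $n=\lceil K/R(\lambda,\epsilon)\rceil$), at least $K(z_l^*-\delta)$ input packets are decoded with probability $1-\exp(-c\delta^2 K)$; since $K/n \to R(\lambda,\epsilon)$, dividing by $n$ shows that for any prescribed $\delta'>0$ one recovers at least $n(R^*(\lambda,\epsilon)-\delta')$ input packets with probability at least $1-\delta'$, which is exactly the definition of achievability.

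The only real obstacle is the bootstrapping step in the second paragraph: since the inequality $f(x;\lambda)\geq (1+\epsilon)x$ is vacuous at $x=0$, one cannot apply it directly to get the first iterate off the ground, and must instead produce a strictly positive $z_1$ by exploiting the Poisson mass on degree-one timeslots as above. Everything else is bookkeeping to reconcile the ``fraction of users decoded'' bound in Theorem~\ref{the:1} with the ``packets per timeslot'' definition of achievable rate.
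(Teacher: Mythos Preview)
Your argument is correct and matches the paper's approach: fix the optimizer $\Lambda$ of~\eqref{eq:6}, invoke Theorem~\ref{the:1}, show the recursion $z_i=f(z_{i-1};\lambda)$ exceeds $\eta$ after finitely many steps, and then bound $z_l^*$ via monotonicity (Lemma~\ref{lemma:gamma}). The paper simply asserts that $\{z_i\}$ converges past $\eta$ without spelling out the $z_1>0$ bootstrap you supply; your justification there tacitly uses that some $\Lambda_i>0$ for $i\geq 2$, which is automatic since $\Lambda_1=1$ would make $f\equiv 0$ and violate the constraint in~\eqref{eq:6}.
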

\begin{itwfull}
\begin{proof}
  For any $\delta >0$, let $R = R(\lambda,\epsilon) - \sqrt{\delta}$. We show
that for sufficiently large $K$, there exists a degree distribution
$\Lambda$ such that using $n\leq K/R$ timeslots, the batch BP decoding
algorithm can recover at least $K(\eta^*-\sqrt{\delta})$ input
packets with high probability, where
\begin{equation*}
  \eta^* = \left(1-\Lambda\left(1-\sum_{k}\frac{\lambda^k e^{-\lambda}}{k!}
  \Gamma_k(\eta)\right)\right).
\end{equation*}
That is the code has a rate at least $R^*(\lambda,\epsilon)-\delta$
packet per timeslot. 

Let $n=\lceil K/R(\lambda,\epsilon)\rceil$. For the degree
distribution $\Lambda$ achieving $R(\lambda,\epsilon)$ in
\eqref{eq:6}, we know by Theorem~\ref{the:1} that at
$K(z_l^*-\sqrt{\delta})$ input packets can be recovered with high
probability. We know that the sequence $\{z_i\}$ converges to a value
larger than $\eta$. Then there exists a sufficiently large $l$ such
that $z_{l-1}\geq \eta$. Thus, $z_l^* \geq \eta^*$. The proof is completed.
\end{proof}
\end{itwfull}

\section{Performance Analysis}
\label{sec:performance-analysis}

We generalize the tree-based approach \cite{luby98} to analyze the
performance of the batched BP decoder and prove
Theorem~\ref{the:1}. %

\subsection{Decoding Graph}

The relation between the input packets and the timeslots can be
represented by a random Tanner graph $G$, where the input packets are
represented by the variable nodes, and timeslots are represented by the
check nodes.  We henceforth equate a variable node with the
corresponding input packet, and a check node with the corresponding
timeslot. There exists an edge between a variable node and a check
node if and only if the corresponding input packet is transmitted in
the timeslot. Associated with each check node is a random transfer
matrix $H$. For given degree $d$ of the timeslot, the distribution of
$H$ is $g(\cdot|d)$.

The $l$-neighborhood of a variable node $v$, denoted by $G_l(v)$, is
the subgraph of $G$ that includes all the nodes with distance less
than or equal to $l$ from variable node $v$, as well as all the edges
involved. Since $G_l(v)$ has the same distribution for all variable
node $v$, we denote by $G_l$ the generic random graph with the same
distribution as $G_l(v)$. After $l$ iterations of the batched BP
decoding, whether or not a variable node $v$ is decoded is determined
by its $2l$-neighborhood.  

\begin{itwfull}
  Motived by the tree-based approach, in the remainder of this section,
we first analyze the decodable probability of the root node of a
random tree, and then show that the decoding performance of $G_{2l}$
is similar to that of the tree. The proof of Theorem~\ref{the:1} is
then completed by a martingale argument.

\end{itwfull}

\subsection{Tree Analysis}

Fix two degree distributions $\alpha(x)$ and $\beta(x)$.
Let $T_l$ be a tree of $l+1$ levels. The root of the tree is at level
$0$ and the leaves are at level $l$. Each node at an even level is a
variable node, and each node at an odd level is a check node. 
The probability that the root node has $i$ children is $\Lambda_i$. 
Except for the root node, all the other variable nodes have $i$
children with probability $\alpha_i$. All the check node has $i$
children with probability $\beta_i$. 
\begin{itwfull}
  An instance of $T_4$ is shown in Fig.~\ref{fig:tree}.
\end{itwfull}

\begin{lemma}
  Let $x_l^*$ be the probability that the root variable node is
  decodable by applying the batched BP decoding on $T_{2l}$.  
  We have
  \begin{equation*}
    x_l^* = 1 - \Lambda\left(1-\textstyle \sum_{k} \beta_k \Gamma_k(x_{l-1}) \right),
  \end{equation*}
  where $x_0=0$ and for $1 \leq i < l$,
  \begin{equation*}
    x_i = 1 - \alpha\left(1- 
    \textstyle \sum_{k} \beta_k \Gamma_k(x_{i-1}) \right).
  \end{equation*}
\end{lemma}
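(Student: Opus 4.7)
The plan is to induct on the depth, propagating a decoding probability upward from the leaves of $T_{2l}$. For each $i \in \{0, 1, \ldots, l-1\}$, let $x_i$ denote the probability that a non-root variable node sitting at level $2(l-i)$ of $T_{2l}$ is recovered after $i$ iterations of the batched BP decoder restricted to the subtree rooted at that node. The base case $x_0 = 0$ is immediate, since a leaf variable node has no child check nodes through which information could flow up to it.

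For the inductive step, fix $i \ge 1$ and a non-root variable node $v$ at level $2(l-i)$, and consider one of its child check nodes $c$. The node $c$ has $k$ further children (distinct from $v$) with probability $\beta_k$, together with an independent transfer matrix $\bH$ drawn from $g(\cdot \mid k+1)$. By the user-symmetry condition~\eqref{eq:4}, we may reorder the users participating in $c$'s batch so that $v$ corresponds to the largest index $\idx^{-1}(k+1)$ without altering the distribution of $\bH$. Each of the $k$ grandchild variable nodes of $v$ through $c$ is the root of a disjoint subtree of depth $2(i-1)$, so by the inductive hypothesis together with the independence of transfer matrices across timeslots, these grandchildren are recovered independently with probability $x_{i-1}$. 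The definitions of $\gamma(\bH)$ and $\Gamma_k$ then identify the probability that $c$ decodes $v$, conditional on $c$ having $k$ other children, as $\Gamma_k(x_{i-1})$, giving an unconditional value
\[
p_i \;:=\; \sum_k \beta_k\,\Gamma_k(x_{i-1}).
\]
The node $v$ is recovered iff at least one of its $j \sim \alpha$ child check nodes decodes it, and these events depend on disjoint subtrees and independently sampled transfer matrices, hence are mutually independent. Consequently
\[
x_i \;=\; 1 - \sum_j \alpha_j (1-p_i)^j \;=\; 1 - \alpha(1-p_i),
\]
which is the stated recursion. Applying exactly the same calculation at the root, whose degree distribution is $\Lambda$ rather than $\alpha$, then yields the claimed formula for $x_l^*$.

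The only non-routine step is the identification of the per-check-node decoding probability with $\Gamma_k(x)$. This rests on two ingredients: the user-symmetry property~\eqref{eq:4}, which permits treating the parent variable as if it were the largest-indexed user so that the definition of $\gamma(\bH)$ applies directly; and the independence of the transfer matrices across check nodes and of the recovery status of variables in the disjoint subtrees. Once this local identification is pinned down, the rest reduces to a standard AND-OR generating-function computation on a tree, in the spirit of~\cite{luby98}.
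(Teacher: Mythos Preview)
Your proof is correct and follows essentially the same approach as the paper: both argue by induction from the leaves, invoke the symmetry condition~\eqref{eq:4} to place the parent variable at the last index so that $\gamma(\bH)$ applies, identify the per-check decoding probability with $\Gamma_k(x_{i-1})$, and finish with the standard OR step at each variable node via $\alpha$ (or $\Lambda$ at the root). Your exposition is slightly more explicit about the independence of disjoint subtrees, but the argument is the same.
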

\begin{itwfull}
\begin{proof}
  Denote by $y_i$ the probability that a check node at level
  $2(l-i)+1$ can recover its parent variable node by solving the
  associated linear system of this check node with possibly the
  knowledge of its children variable nodes. We have $x^*_l =
  1-\Lambda(1-y_l)$. Suppose that a variable node at level $2(l-i)$ is
  decodable by at least one of its children check node with
  probability $\hat{x}_{i}$, $0\leq i < l$.  We have $\hat{x}_i = 1 -
  \alpha(1-y_{i})$ for $0<i<l$ and $\hat{x}_0=0$.
  
  Fix a check node $c$ at level $2(l-i)+1$. With probability
  $g(\bH|k+1)\beta_k$, the check node has $k$ children and the
  associated linear system has $\bH$ as the transfer matrix. We
  permutate the rows of $\bH$ such that the \emph{last} row of $\bH$
  corresponds to the parent variable node. By \eqref{eq:4}, the
  permutation does not change the distribution $g(\bH|k+1)$.  Index
  the $k$ children by $1, \ldots, k$.  Using Gaussian elimination in
  the batched BP decoder, the parent variable node of check node $c$
  can be recovered if and only if for certain $S \in
  \gamma(\bH)$, all the children variable nodes indices by $S$
  are decodable. Therefore, the probability that the parent variable
  node of $c$ is decodable is $\sum_{S\in \gamma(\bH)}
  \hat{x}_{i-1}^{|S|}(1-\hat{x}_{i-1})^{k-|S|}$ for transfer matrix
  $\bH$. Considering all the possible transfer matrices, we have $y_i
  = \sum_k\beta_k\Gamma_k(\hat{x}_{i-1})$. The proof is completed by
  $x_i=\hat{x}_i$.
\end{proof}
\end{itwfull}

\begin{itwfull}
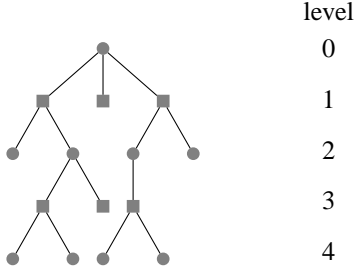
\begin{figure}
  \centering
  \begin{tikzpicture}[
    vnode/.style={circle,draw=gray,fill=gray,thick,inner
 sep=0pt,minimum size=4pt},
    cnode/.style={draw=gray,fill=gray,thick,inner
 sep=0pt,minimum size=4pt},
    level distance=7mm,level/.style={sibling distance=8mm}]

  \node[vnode] at (0,0) {}
  child { node[cnode] {}
    child {node[vnode]{}}
    child {node[vnode]{} 
      child {node[cnode]{}
        child {node[vnode]{}}
        child {node[vnode]{}}
      }
      child {node[cnode]{}}
    }
  }
  child { node[cnode] {}}
  child { node[cnode] {}
    child {node[vnode]{} 
      child {node[cnode]{}
        child {node[vnode]{}}
        child {node[vnode]{}}
      }
    }
    child {node[vnode]{} }
  };

  \matrix[column sep=8mm,row sep=2mm] at (3,-1.1) {
    \node[label=above:level] {0}; \\
    \node {1}; \\
    \node {2}; \\
    \node {3}; \\
    \node {4}; \\
  };

  \end{tikzpicture}
  \caption{An instance of $T_4$.}
  \label{fig:tree}
\end{figure}
\end{itwfull}

\begin{itwfull}
We prove the following stronger result than  Lemma~\ref{lemma:1}.
\begin{lemma}\label{lemma:1a}
  Suppose that $K/n \rightarrow R$ as $K\rightarrow \infty$.
  Fix a timeslot $t$ and an integer $k\geq 0$.
  Under the condition that a fixed set of $k$ users do not transmit 
  at timeslot $t$, the degree of 
  timeslot $t$ converges to the Poisson distribution with parameter
  $\lambda = R\bar \Lambda$ as $K\rightarrow \infty$. 
\end{lemma}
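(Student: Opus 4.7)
My plan is to reduce this lemma to a standard Poisson limit for a binomial distribution. The key observation is that the conditioning event decouples nicely from the remaining users: since each user independently samples its degree from $\Lambda$ and then selects its transmission timeslots uniformly at random from $[n]$ independently of every other user, conditioning on a fixed set of $k$ users not transmitting at $t$ leaves the joint law of the other $K-k$ users entirely unchanged. Hence the conditional degree of $t$ equals the number of those remaining $K-k$ users that happen to transmit at $t$, and these indicator variables are mutually independent.

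The next step is to compute, for a single user outside the conditioning set, the probability $p_n$ that it transmits at $t$. Conditioned on its degree being $d$, the user picks its $d$ transmission slots uniformly without replacement from $[n]$, so $t$ is included with probability $d/n$. Averaging over the degree distribution gives $p_n = \bar\Lambda/n$. Consequently, the conditional degree of $t$ is distributed as Binomial$(K-k,\bar\Lambda/n)$.

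Finally, because $k$ is fixed and $K/n \to R$, the mean satisfies $(K-k)\bar\Lambda/n \to R\bar\Lambda = \lambda$, while the per-trial success probability $\bar\Lambda/n$ tends to zero. The classical Poisson limit theorem for triangular arrays of Bernoullis (binomial with vanishing success probability and bounded mean) then delivers convergence in distribution to Poisson$(\lambda)$, from which the special case $k=0$ recovers Lemma~\ref{lemma:1} as claimed.

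I do not anticipate a substantive obstacle: the argument is essentially the law of rare events. The only point requiring any care is the initial reduction step, namely verifying that conditioning on the stated event preserves the independence of the remaining users, and this follows immediately from the product structure of the two-stage sampling (degree, then slot set).
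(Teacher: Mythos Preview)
Your proposal is correct and follows essentially the same approach as the paper: both identify the conditional degree of timeslot $t$ as $\mathrm{Binomial}(K-k,\bar\Lambda/n)$ via the per-user transmission probability $\bar\Lambda/n$, and then invoke the Poisson limit. Your write-up is in fact more explicit than the paper's about why conditioning on the $k$ excluded users leaves the remaining users independent and about the computation of $p_n$; the only small point the paper mentions that you leave implicit is the harmless requirement $n\geq D$ so that the degree-$d$ slot selection is well defined.
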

\begin{IEEEproof}
  Let $\Theta$ be the set of users that do not transmit at timeslot
  $t$. For each user that is not in $\Theta$, the probability that
  this user transmits a packet at timeslot $t$ is $\bar\Lambda/n$,
  when $n$ is larger than $D$. Therefore, the degree of timeslot $t$
  follows a binomial distribution with parameter $(K-k,
  \bar\Lambda/n)$, which converges to the Poisson distribution with
  parameter $R\bar\Lambda$ when $K\rightarrow \infty$.
\end{IEEEproof}
\end{itwfull}

For a positive integer $L$, let
$\epsilon_L = 1 - \sum_{d=0}^L \frac{\lambda^de^{-\lambda}}{d!}.$
We are interested in the following instances of $\alpha$ and $\beta$
\begin{IEEEeqnarray}{r.r}
  \label{eq:ab}
  \alpha(x) = \frac{\Lambda'(x)}{\bar\Lambda},
  &
  \beta(x) = 
  \frac{1}{1-\epsilon_L}
  \sum_{k=0}^L \frac{\lambda^k e^{-\lambda}}{k!} x^k.
\end{IEEEeqnarray}
Let $\mathcal{G}_l(L)$ be the set of trees of $l+1$ levels where each
check node has at most $L$ children and each variable node has at most
$D$ children. 
\begin{lemma}\label{lemma:4}
  When $K$ is sufficiently
  large, for any $\mathbf{G}_l\in\mathcal{G}_l(L)$,
  \begin{equation*}
    \Pr\{G_l = \mathbf{G}_l\} \geq \Pr\{T_l = \mathbf{G}_l\} - c_{l,L}\epsilon_L,
  \end{equation*}
  where $c_{l,L} = O(L^{\lfloor l/2\rfloor})$ and the degree distributions
  of $T_l$ are given in \eqref{eq:ab}.
\end{lemma}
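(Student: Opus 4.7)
The plan is to expose $G_l$ in breadth-first order from the root variable node and to write $\Pr\{G_l=\mathbf{G}_l\}$ as a product of conditional probabilities, one per newly revealed node, then to compare this product term by term with the analogous decomposition of $\Pr\{T_l=\mathbf{G}_l\}$. The truncation of the Poisson check-node distribution at $L$ will contribute the only irreducible loss---$\epsilon_L$ per check node---while all other finite-$K$ discrepancies get absorbed into the ``sufficiently large $K$'' hypothesis.

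I would organize the factors by node type. The root contributes $\Lambda_{d_0}$ in both models, an exact match. For each non-root variable node $v'$ revealed through a parent check node $c$, a Bayes calculation conditional on $v'$ transmitting at $c$ yields $\Pr\{\deg(v')=d+1\}=(d+1)\Lambda_{d+1}/\bar\Lambda=\alpha_d$, matching $T_l$ up to a vanishing correction accounting for the $d$ remaining transmissions having to land on fresh timeslots. The transfer-matrix factor $g(\bH\mid k+1)$ is common to both models by construction. The only non-trivial mismatch is the children count of check nodes: by Lemma~\ref{lemma:1a}, conditional on the parent's edge and on the bounded-size exposed history, this count is Binomial in $G$ and converges to $\mathrm{Poisson}(\lambda)$ as $K\to\infty$. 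Hence, for any $\delta>0$ and sufficiently large $K$, the $G$-factor at each check node is at least $\lambda^ke^{-\lambda}/k!-\delta=(1-\epsilon_L)\beta_k-\delta$, falling short of the $T_l$-factor $\beta_k$ by at most $\epsilon_L\beta_k+\delta$.

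I would then aggregate over the $N_c$ check nodes of $\mathbf{G}_l$ via the elementary product inequality $\prod_i a_i\geq\prod_i b_i-\sum_i(b_i-a_i)_+$ valid for $a_i,b_i\in[0,1]$. Using $\beta_k=O(1)$ whenever $\epsilon_L$ is bounded away from $1$, this yields
\begin{equation*}
  \Pr\{G_l=\mathbf{G}_l\}\geq \Pr\{T_l=\mathbf{G}_l\}-O(N_c)(\epsilon_L+\delta).
\end{equation*}
A direct count in a tree of $l+1$ levels with branching at most $D$ at variable nodes and at most $L$ at check nodes gives $N_c=O(L^{\lfloor l/2\rfloor})$ with $D$ treated as a constant, and choosing $K$ large enough so that $\delta\leq\epsilon_L$ delivers the claimed bound with $c_{l,L}=O(L^{\lfloor l/2\rfloor})$.

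The hard part will be controlling the conditioning uniformly through the BFS exposure: at each step the next node's distribution is conditioned on a bounded but nontrivial history of fixed user--timeslot assignments, and I would need both the Binomial-to-Poisson approximation and the probability of tree-likeness (no revisited user or timeslot) to converge to the tree model uniformly over all such histories. For fixed $l,L,D$ both contributions vanish as $K\to\infty$ and cleanly absorb into $\delta$, after which the per-check-node truncation error $\epsilon_L$ is the only surviving term on the right-hand side.
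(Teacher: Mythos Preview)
Your proposal is correct and follows essentially the same route as the paper: a layer-by-layer (the paper) or node-by-node (your BFS) exposure of $G_l$, matching variable-node factors exactly to $\alpha$ in the limit, matching check-node factors to $\mathrm{Poisson}(\lambda)$ via Lemma~\ref{lemma:1a}, and then aggregating the per-check-node truncation deficit $\epsilon_L$ over the $O(L^{\lfloor l/2\rfloor})$ check nodes. The paper organizes this as an induction on $l$ and uses $(1-\epsilon_L)^N-1\geq -N\epsilon_L$ at each even level rather than your telescoping product inequality, but the content is the same.
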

\begin{itwfull}
\begin{proof}
  We show by induction that
\begin{equation}
  \label{eq:5}
  \Pr\{G_l = \mathbf{G}_l\} \geq \Pr\{T_l =
\mathbf{G}_l\} - c_{l,L}\epsilon_L,
\end{equation}
where $c_{l,L} = O(L^{\lfloor l/2\rfloor})$.

We prove the lemma by induction. When $l=1$, $G_1$ and $T_1$ follow
the same distribution. For $l>1$, we have
\begin{equation*}
  \Pr\{G_l = \mathbf{G}_l\} = \Pr\{G_l = \mathbf{G}_l| G_{l-1} =
  \mathbf{G}_{l-1}\} \Pr\{G_{l-1} = \mathbf{G}_{l-1}\},
\end{equation*}
where $\mathbf{G}_{l-1}$ is the subgraph of $\mathbf{G}_l$ obtained by
removing the leaf nodes. 
We assume that 
\begin{equation*}
  \Pr\{G_{l-1} = \mathbf{G}_{l-1}\} \geq \Pr\{T_{l-1} =
\mathbf{G}_{l-1}\} - c_{l-1,L}\epsilon_L,
\end{equation*}
for certain function $c_{l-1,L} = O(L^{\lfloor (l-1)/2 \rfloor})$.
We then prove \eqref{eq:5} with $l>0$ for two cases:$l$ is even and
$l$ is odd.

We first consider the case that $l$ is even. Suppose that
$\mathbf{G}_{l-1}$ has $N$ leaf \emph{check} nodes, which are at level
$l-1$ of $\mathbf{G}_l$. Denote by $k_i$ the number of children
variable nodes of the $i$-th check node at level $l-1$ in
$\mathbf{G}_l$. Since $\mathbf{G}_l \in \mathcal{G}_l(L)$, we have
$k_i \leq L$. By Lemma~\ref{lemma:1a}, we have
\begin{IEEEeqnarray*}{rCl}
  \Pr\{G_l = \mathbf{G}_l| G_{l-1} =
  \mathbf{G}_{l-1}\} \rightarrow \prod_{i=1}^N
  \frac{\lambda^{k_i}e^{-\lambda}}{k_i!},\quad K\rightarrow\infty.
\end{IEEEeqnarray*}
On the other hand, we have
\begin{IEEEeqnarray*}{rCl}
  \Pr\{T_l = \mathbf{G}_l| T_{l-1} = 
  \mathbf{G}_{l-1}\} = \frac{1}{(1-\epsilon_L)^N}\prod_{i=1}^N
  \frac{\lambda^{k_i}e^{-\lambda}}{k_i!}.
\end{IEEEeqnarray*}
Therefore, for sufficiently large $K$,
\begin{IEEEeqnarray*}{rCl}
  \IEEEeqnarraymulticol{3}{l}{ \Pr\{G_l = \mathbf{G}_l| G_{l-1} =
  \mathbf{G}_{l-1}\} - \Pr\{T_l = \mathbf{G}_l| T_{l-1} = 
  \mathbf{G}_{l-1}\}} \\
  & \geq & (1-\epsilon_L)^N- 1 - \epsilon_L \\
  & \geq & -(N+1)\epsilon_L.
\end{IEEEeqnarray*}
Note that $N= O(L^{\lfloor l/2 \rfloor})$.

We then consider the case that $l$ is odd. Suppose that
$\mathbf{G}_{l-1}$ has $N$ leaf \emph{variable} nodes, which are at
level $l-1$ of $\mathbf{G}_l$. Denote by $k_i$ the number of children
check nodes of the $i$-th variable node at level $l-1$ of
$\mathbf{G}_l$. We know that $k_i \leq D-1$. We then have
\begin{IEEEeqnarray*}{rCl}
  \IEEEeqnarraymulticol{3}{l}{\Pr\{G_l = \mathbf{G}_l| G_{l-1} =
  \mathbf{G}_{l-1}\}} \\
  & \rightarrow & \prod_{i=1}^N \frac{(k_i+1)\Lambda_{k_i+1}}{\sum_d
    d\Lambda_{d}} \\
  & = & 
  \Pr\{T_l = \mathbf{G}_l| T_{l-1} =  \mathbf{G}_{l-1}\}.
\end{IEEEeqnarray*}
Therefore, for sufficiently large $K$, 
$\Pr\{G_l = \mathbf{G}_l| G_{l-1} =
  \mathbf{G}_{l-1}\} \geq \Pr\{T_l = \mathbf{G}_l| T_{l-1} = 
  \mathbf{G}_{l-1}\} -  \epsilon_L$.
\end{proof}
\end{itwfull}

\subsection{Proof of Theorem~\ref{the:1}}

Now we are ready to prove Theorem~\ref{the:1}. We say $G_l$ or $T_l$
is decodable if its root is decodable by the batched BP decoding
algorithm. Fix a sufficiently large $L$. We have
\begin{IEEEeqnarray*}{rCl}
  \IEEEeqnarraymulticol{3}{l}{\Pr\{G_{2l}\in \mathcal{G}_{2l}(L) \text{ and is decodable}\}} \\ & \geq &
  \sum_{\mathbf{G}\in\mathcal{G}_{2l}(L)} \Pr\{\mathbf{G} \text{ is decodable}\}  \Pr\{G_{2l} = \mathbf{G}\} \\
  & \geq & 
  \sum_{\mathbf{G}\in\mathcal{G}_{2l}(L)} \Pr\{\mathbf{G} \text{ is decodable}\}
  (\Pr\{{T}_{2l} = \mathbf{G}\} - \frac{\epsilon}{4|\mathcal{G}_{2l}(L)|}) \\
  & \geq & 
  \Pr\{{T}_{2l} \text{ is decodable}\} - \epsilon/4 
   =    x^*_l - \epsilon/4  
   \geq    z^*_l - \epsilon/2,
\end{IEEEeqnarray*}
where the second inequality follows from Lemma~\ref{lemma:4} and the
last inequality follows that $x_l^*\rightarrow z_l^*$ when
  $L\rightarrow \infty$.

Let $A$ be the number of variable nodes $v$ with $G_{2l}(v) \in \mathcal{G}_{2l}(L)$ and decodable. 
We have $\E[A] \geq (z^*_l-\epsilon/2)K$. 
For $i=1,\ldots, K$, denote $Z_i = G_1(v_i)$.
Define $X_i = \E[A|Z_1,\ldots,Z_i]$. By definition, $X_i$ is a Doob's
martingale with $X_0=\E[A]$ and $X_K=A$. Since the exposure of a variable
node will affect the degrees of a constant number of subgraphs
$G_{2l}(v)$ with check node degree $\leq L+1$, we have
$|X_i-X_{i-1}| \leq c'$, a constant does not depend on $K$. Applying
the Azuma-Hoeffding Inequality, we have
  \begin{equation*}
    \Pr\{A \leq \E[A] - \epsilon/2 K \} \leq \exp\left(- \frac{\epsilon^2K}{8c'^2} \right).
  \end{equation*}
  Hence $\Pr\{A > (z^*_l-\epsilon) K\} > 1 - \exp\left(-
    \frac{\epsilon^2K}{8c'^2} \right)$. This completes the proof of
  Theorem~\ref{the:1}.

\section{An Example}
\label{sec:example}

In this section, we use an example to illustrate how the proposed NCSA
scheme works. 
Here $q=2$ and $m=1$. 
Fix an integer $N\geq 2$. We consider the PNC scheme that
has the following outputs:
i) When one user transmits in a timeslot, the
  packet of the user is decoded;
ii) When two to $N$ users transmit in a timeslot, one or two binary
  linear combinations of the input packets are decoded; and 
iii) When more than $N$ users transmit in a timeslot, nothing is
  decoded.

Taking $N=3$ as an example, when one user transmits in a timeslot, the transfer
matrix is $\bH_{1} = [1]$, and $g(\bH_1|1)=1$.
When two users transmit in a timeslot, the 
possible transfer matrices are
\begin{IEEEeqnarray*}{c}
  \bH_{21}=
  \begin{bmatrix}
    1 \\ 1
  \end{bmatrix},
  \bH_{22}=
  \begin{bmatrix}
    1 & \\ & 1
  \end{bmatrix}, 
  \bH_{23}=
  \begin{bmatrix}
    & 1 \\ 1 &
  \end{bmatrix}.
\end{IEEEeqnarray*}
Since $\bH_{22}$ and $\bH_{23}$ have the same probability,
$g(\bH_{21}|2) + 2g(\bH_{22}|2) = 1.$
Now consider that three users transmit in a timeslot.
Define
\begin{equation*}
   \bH_{31}  = 
  \begin{bmatrix}
    1 \\ 1 \\ 1
  \end{bmatrix}, 
  \bH_{32}  = 
  \begin{bmatrix}
    1 & \\ 1 & \\ & 1
  \end{bmatrix}, 
  \bH_{33} = 
  \begin{bmatrix}
    1 & \\ 1 & 1 \\ & 1
  \end{bmatrix}.
\end{equation*}
The possible transfer matrices are given by the row permutations of
$\bH_{3i}$, $i=1,2,3$. Note that for two transfer matrices that are
permutation of each other, they have the same probability to
occur. Thus we have
\begin{equation*}
  g(\bH_{31}|3) + 3g(\bH_{32}|3) + 6g(\bH_{33}|3) = 1.
\end{equation*}
\begin{itwfull}
  We then have
\begin{IEEEeqnarray*}{rCl}
  \Gamma_0(x) & = & 1, \quad
  \Gamma_1(x)  =  g(\bH_{21}|2)x+2g(\bH_{22}|2),\\
  \Gamma_2(x) & = & g(\bH_{31}|3)x^2+g(\bH_{32}|3)(1+2x) \\
  & & +g(\bH_{33}|3)(8x-2x^2).
\end{IEEEeqnarray*}

\end{itwfull}

  In general, for a timeslot of $d$ users, we denote by $\bH_{d1}$ the
  single column transfer matrix of all ones. For transfer matrices of
  two columns, there are three types of rows: $[0,1]$, $[1,0]$ and
  $[1,1]$. Denote by $\bH_{d2}(a)$ a generic transfer matrix
  with $a$ rows of type $[0,1]$ and $d-a$ rows of type $[1,0]$. 
  Here $0<a \leq \lfloor d/2 \rfloor$. All the row permutations of $\bH_{d2}(a)$ are
  possible transfer matrices. 

  Denote by $\bH_{d3}(a_1,a_2)$ a generic transfer matrix with $a_1$
  rows of type $[0,1]$, $a_2$ rows of type $[1,0]$ and $d-a_1-a_2$
  rows of type $[1,1]$. Here $a_2\geq a_1>0$ and $a_1+a_2<d$. 
  All the row permutations of $\bH_{d3}(a_1,a_2)$ are
  possible transfer matrices. Thus,
  \begin{IEEEeqnarray*}{rCl}
    1 & = & g(\bH_{d1}|d) + \sum_{a=1}^{\lfloor d/2 \rfloor}\binom{d}{a}g(\bH_{d2}(a)|d) \\ & &
    + \sum_{a_1=1}^{d-2}\sum_{a_2=a_1}^{d-a_1-1} \binom{d}{a_1,a_2}
    g(\bH_{d3}(a_1,a_2)|d).
  \end{IEEEeqnarray*}
\begin{itwfull}
  We can then calculate that
  \begin{IEEEeqnarray*}{rCl}
    \IEEEeqnarraymulticol{3}{l}{\Gamma_{d-1}(x) = g(\bH_{d1}|d) x^{d-1}} \\
    & + &
    \sum_{a=1}^{\lfloor d/2 \rfloor}g(\bH_{d2}(a)|d)\left[\binom{d-1}{a-1}x^{a-1} + \binom{d-1}{a}x^{d-a-1} \right] \\ 
    & 
    + & \sum_{a_1=1}^{d-2}\sum_{a_2=a_1}^{d-a_1-1}
    g(\bH_{d3}(a_1,a_2)|d) \Bigg[\binom{d-1}{a_1-1,a_2} x^{d-a_2-1}
    \\
    & + &
    \binom{d-1}{a_1,a_2-1} x^{d-a_1-1} \\ 
    & + & \binom{d-1}{
      a_1,a_2} x^{d-a_1-a_2-1}(x^{a_1}+x^{a_2}-x^{a_1+a_2}) \Bigg].
  \end{IEEEeqnarray*}
\end{itwfull}

Given average degree $\lambda$ of a timeslot, the
average number of output packets decoded in a timeslot converges to
\begin{equation*}
  U(\lambda) = \sum_d \frac{\lambda^d e^{-\lambda}}{d!}
\sum_{\bH\in \mathcal{H}_d} \rank(\bH) g(\bH|d),
\end{equation*}
when $K\rightarrow\infty$. The achievable rate of the NCSA system is
upper bounded by $U(\lambda)$ packets per timeslot. In the case of
this example, the achievable rate bound is given by
\begin{IEEEeqnarray*}{rCl}
U(\lambda) & = & \sum_{d=1}^N \frac{\lambda^d e^{-\lambda}}{d!}
\Bigg[ g(\bH_{d1}|d) + 2\sum_{a=1}^{\lfloor d/2 \rfloor}\binom{d}{a}g(\bH_{d2}(a)|d) \\ & &
    + 2\sum_{a_1=1}^{d-2}\sum_{a_2=a_1}^{d-a_1-1} \binom{d}{a_1,a_2}
    g(\bH_{d3}(a_1,a_2)|d)
\Bigg].
\end{IEEEeqnarray*}
Note that the upper bound is in general not tight since the packets
decoded in different timeslots can be the same. 

We solve \eqref{eq:6} for the
above example with the results in Fig.~\ref{fig:1}, where we assume
the uniform distribution for each possible transfer matrices.
We also evaluate the corresponding upper bound $U(\lambda)$ for
comparison.

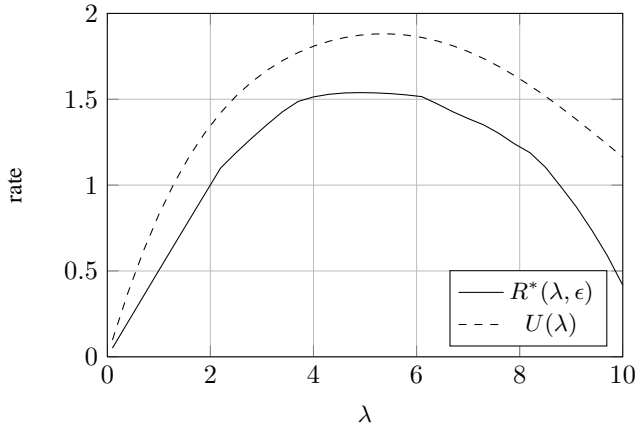
\begin{figure}
  \centering
  \begin{tikzpicture}
    \begin{axis}[
      xlabel=$\lambda$,ylabel=rate,
      width=240,height=175,
      xmin=0,xmax=10,
      ymin=0,ymax=2,
      legend pos= south east,
      legend style={font=\small},
      label style={font=\small},
      mark size={1.0},
      grid=both]

      \addplot[] table[x=lambda,y=rate] {ex2.txt};
      \addlegendentry{$R^*(\lambda,\epsilon)$}
      \addplot[dashed] table[x=lambda,y=bound] {ex2.txt};
      \addlegendentry{$U(\lambda)$}
    \end{axis}
  \end{tikzpicture}
  \caption{Achievable rates with $\eta=0.99$, $N=10$ and $\epsilon=0.001$.
    $R^*(\lambda,\epsilon)$ is the maximum achievable rate
    optimized w.r.t $\lambda$, and $U(\lambda)$ is an upper bound on the
    maximum achievable rate w.r.t. $\lambda$.}
  \label{fig:1}
\end{figure}

\end{document}